\newtheorem{definition}{Definition}
\newtheorem{assumption}{Assumption}
\newtheorem{theorem}{Theorem}
\newtheorem{lemma}{Lemma}
\def\BibTeX{{\rm B\kern-.05em{\sc i\kern-.025em b}\kern-.08em
		T\kern-.1667em\lower.7ex\hbox{E}\kern-.125emX}}
\begin{document}
	\title{Decentralized Multi-Agent Reinforcement Learning: An Off-Policy Method}
	\author{Kuo Li, \IEEEmembership{Student Member, IEEE}, Qing-Shan Jia, \IEEEmembership{Senior Member, IEEE}
		\thanks{K. Li and Q.-S. Jia are with the Center for Intelligent and Networked System, Department of Automation, Beijing National Research Center for Information Science and Technology (BNRist), Tsinghua University, Beijing 100084, P.R. China (e-mail: li-k19@mails.tsinghua.edu.cn, jiaqs@tsinghua.edu.cn)}}
	
	\maketitle
	
	\begin{abstract}
		We discuss the problem of decentralized multi-agent reinforcement learning (MARL) in this work. In our setting, the global state, action, and reward are assumed to be fully observable, while the local policy is protected as privacy by each agent, and thus cannot be shared with others. There is a communication graph, among which the agents can exchange information with their neighbors. The agents make individual decisions and cooperate to reach a higher accumulated reward.
		
		Towards this end, we first propose a decentralized actor-critic (AC) setting. Then, the policy evaluation and policy improvement algorithms are designed for discrete and continuous state-action-space Markov Decision Process (MDP) respectively. Furthermore, convergence analysis is given under the discrete-space case, which guarantees that the policy will be reinforced by alternating between the processes of policy evaluation and policy improvement. In order to validate the effectiveness of algorithms, we design experiments and compare them with previous algorithms, e.g., Q-learning \cite{watkins1992q} and MADDPG \cite{lowe2017multi}. The results show that our algorithms perform better from the aspects of both learning speed and final performance. Moreover, the algorithms can be executed in an off-policy manner, which greatly improves the data efficiency compared with on-policy algorithms.
	\end{abstract}
	
	\begin{IEEEkeywords}
		consensus, multi-agent system, reinforcement learning
	\end{IEEEkeywords}
	
	\section{Introduction}
	In the past decades, reinforcement learning (RL) \cite{sutton2018reinforcement} achieves greatly attention for its excellent performance on various complex tasks, e.g., playing Go \cite{silver2016mastering}, video games \cite{mnih2013playing} and autonomous driving \cite{jin2020game}. Related study mainly focuses on single-agent problems, which can be formulated as Markov Decision Process (MDP). These algorithms can be mainly categorized into three classes, namely, value-based algorithms \cite{mnih2013playing,van2016deep}, policy-based algorithms \cite{schulman2015trust,schulman2017proximal} and actor-critic algorithms (AC) \cite{silver2014deterministic, fujimoto2018addressing, haarnoja2018soft}. AC algorithms are getting more and more popular for their outstanding performance on complex tasks, e.g., control problems of robotics \cite{lillicrap2015continuous,kober2013reinforcement}, and achieve state of the art performance.
	
	Particularly, AC algorithms are mainly composed of two parts, namely the actor and the critic. The actor encodes the policy that specifies how to act under each state. The critic evaluates the policy. Typically, both the actor and critic are approximated by deep neural networks (DNNs) and optimized by stochastic gradient descent (SGD).
	
	With the development of RL technology, the applications of AC-based algorithms gradually extend to more complex tasks, which involve cooperation or competition \cite{gupta2017cooperative,zhang2017dynamic,chen2021multi} among multiple agents. Some technical problems may arise. Firstly, as the number of agents increases, the decision space expands exponentially, which causes the curse of dimensionality. Then, more importantly, in some real applications of multi-agent systems, policies will be protected locally as privacy, thus cannot be optimized globally. Therefore, traditional single-agent RL algorithms may not work well in the multi-agent environment, thus many MARL algorithms are further proposed to solve these problems \cite{zhang2018fully,lowe2017multi}. In this work, we focus on cooperative MARL problems with the AC algorithms.
	
	In cooperative MARL problems, most related algorithms are designed as centralized training and decentralized execution (CTDE), e.g., QMIX \cite{rashid2018qmix} and MADDPG \cite{lowe2017multi}. CTDE enables the application of off-policy algorithms by sharing information among agents. Off-policy algorithms evaluate current policy by trajectories sampled from any distribution, which means historical experience can be reused for current evaluation and higher data efficiency compared with on-policy algorithms, which requires the trajectories used for evaluation must be sampled from current policy. However, the centralized training process also suffers the curse of dimensionality, since the optimization process requires global action, whose size increases with the number of agents. To ease the burden, MARL algorithms executed in a distributed manner are proposed by researchers. For example, Tan \cite{tan1993multi} develops independent Q-learning (IQL), which independently applies a Q-learning algorithm to each agent. Despite the good performance on some tasks, its convergence cannot be guaranteed, since the interplay between agents is not considered. Jia et al \cite{jia2020event} propose to use the event-based optimization (EBO) algorithm to solve the policy optimization problem in a distributed manner. In their work, a networked multi-agent system with imperfect communication channel is considered. They construct an EBO model and design algorithms based on the performance difference equation and the performance derivative equation. The EBO model decreases the size of the decision space and benefits policy searching. Besides, Zhang et al \cite{zhang2018fully} also propose decentralized algorithms with networked agents. During either the training or the execution process, each agent only needs to communicate with its neighbors, rather than collecting global messages. In their work, the training process is decomposed into three steps, namely, critic step, which updates each critic with only local information; consensus step, which adjusts each critic to evaluate global policy by exchanging information among neighbors; and actor step, which improves the local policy with the local critic. The training process of that algorithm is executed in an on-policy manner since global policy cannot be used locally for each agent. Compared with the off-policy manner, it may lose some data efficiency.
	
	In order to improve the data efficiency and provide the convergence guarantee, we propose decentralized MARL algorithms, which can be executed in a distributed and off-policy manner. In our algorithms, each agent optimizes its own AC models to push the global policy to reach better performance. Similar to work in \cite{zhang2018fully}, our algorithms also run with three steps, namely, critic step, consensus step, and actor step. However, different from \cite{zhang2018fully}, in the critic step, each agent only evaluates its own policy by treating others as part of the environment. Therefore, the training process can be executed in an off-policy manner, and the historical experience can be reused as in DQN \cite{mnih2013playing}. Since other agents are treated as part of the environment from the perspective of a given agent, and the policies of all the agents change gradually during the training process, directly reusing historical experience to evaluate current policy will make the training process unstable. In the consensus step, we use the importance sampling (IS) \cite{glynn1989importance} technique to solve this problem. In the actor step, each agent optimizes local policy with its own critic. Firstly, for discrete state-action-space MDP (referring to the MDP has finite state-action pairs in this paper), we use Q-table and policy-table to encode the critic and actor respectively. Corresponding algorithm and convergence analysis are given. Then we extend it to the continuous state-action-space case by replacing the tables with neural networks (NNs). The performance of the proposed algorithms for both cases is evaluated by experiments and it is shown that our algorithms perform competitively from the aspects of both convergence rate and final score. The reinforcement learning algorithm proposed for the discrete state-action-space MDP is extended from IQL \cite{tan1993multi}. We prove that our algorithm can converge to satisfied solutions by setting appropriate learning rates for both the policy evaluation and policy improvement processes.
	
	The contributions of this paper are mainly as follows,
	\begin{enumerate}
		\item A fully decentralized MARL model is proposed, with which both the policy evaluation and policy improvement process can be executed in a distributed manner, and corresponding algorithms for both discrete and continuous state-action-space MDP are proposed;
		\item Under the discrete state-action-space case, convergence analysis is established, which guarantees that the policy will be improved by alternately applying the algorithms of policy evaluation and policy improvement. Under the continuous state-action-space case, the algorithm is designed to be executed in an off-policy manner, which has higher data efficiency.
		\item We design experiments to show their effectiveness, and by comparing with related algorithms, our algorithms are shown to perform better from the aspects of both learning speed and final performance.
	\end{enumerate}
	
	The rest of this paper is organized as follows. We firstly introduce the basics of MDP and the multi-agent setting considered in our work in Section \ref{sec:pre}. Then in Section \ref{sec:fsMDP}, we propose the distributed MARL algorithms for discrete state-action-space MDP followed by its convergence analysis. After that, the algorithm is extended to continuous state-action-space case in Section \ref{sec:csMDP}. Finally, we test our algorithms with experiments in Section \ref{sec:sr} and reach the conclusion in Section \ref{sec:con}.
	
	\section{Preliminary}
	\label{sec:pre}
	We begin by introducing the basics of MDP and the multi-agent setting considered in this work. After that, a decentralized AC model is proposed.
	
	\subsection{MDP and multi-agent setting}
	An MDP can be characterized as a tuple $\mathcal{M}=\langle \mathcal{S},\mathcal{A},P,R,\gamma \rangle$, where $\mathcal{S}$ is the state space and $\mathcal{A}$ is the action space. $P$ is the state transition matrix. And $P(s,a,\tilde{s}):\mathcal{S}\times\mathcal{A}\times\mathcal{S}\to[0,1]$ defines the state transition probability from state $s$ to $\tilde{s}$ by taking action $a$. $R(s, a):\mathcal{S}\times\mathcal{A}\to\mathbb{R}$ defines the reward function when taking action $a$ under state $s$. Typically, $R$ is assumed to be bounded. Finally, $\gamma\in[0, 1)$ is the discounted factor for balancing current and future rewards.
	
	The following multi-agent setting is considered in our work.
	\begin{definition}
		There are totally $N$ agents connected by a communication network $\mathcal{G}=\left\{\mathcal{X},\mathcal{E}\right\}$, where $\mathcal{X}$ is the vertex set and $\mathcal{E}$ is the edge set. Besides, $\mathcal{G}$ is a connected graph. In this multi-agent system, the global action $a_t$ can be decomposed as $a_t=(a_t^1, a_t^2, \cdots, a_t^N)\in \mathcal{A}$, where $a_t^i$ is the action of the agent $i$ at time $t$. As \cite{zhang2018fully}, it is assumed that the system state $s_t\in \mathcal{S}$ is fully observable for each agent, and the instantaneous reward depends on the current state $s_t$ as well as the joint action $a_t$, i.e., $r_t=R(s_t,a_t)$, and is shared by all of the agents.
	\end{definition}
	
	Besides, each agent only has permission to use its own local policy. For the cause of privacy, the policy of other agents can be used in neither training nor execution processes.
	
	\subsection{Actor-Critic}
	\label{sec:ac}
	Here we briefly introduce AC, which is popular in RL. In AC, the actor defines the policy and the critic estimates the discounted future reward.
	
	Probabilistic policy $\pi(s,a):\mathcal{S}\times\mathcal{A}\to \mathbb{R}$ is considered in this work, which defines the probability (for the case of discrete action space) or probability density (for the case of continuous action space) of taking action $a$ under state $s$. Each agent maintains its local policy $\pi^i(s,a^i):\mathcal{S}\times\mathcal{A}^i\to \mathbb{R}$, and the global policy can be formed correspondingly $\pi(s,a)=\prod{i=1}^N \pi^i(s, a^i)$. Besides, it is assumed that following condition is satisfied.
	\begin{assumption}
		\label{ass:gt0}
		For each state-action pair, $\pi(s,a)>0$, which is equivalent to
		\begin{equation}
		\pi^i(s,a^i)>0,\forall i, s, a^i.
		\end{equation}
	\end{assumption}
	
	It guarantees that each state-action pair will be visited infinitely often, which is essential for the convergence of the critics.
	
	We assign a critic (action-value function) to each agent $Q_{\pi}^{i}(s,a^i): \mathcal{S}\times\mathcal{A}^i\to \mathbb{R}$, which tracks the following term
	\begin{equation}
	\label{equ:q_i}
	Q_{\pi}^i(s,a^i)=\mathop{\mathbb{E}}\limits_{\pi}\left[\sum_{t=0}^\infty\gamma^t {R}(s_t,a_t)|s_0=s,a_0^i=a^i\right],
	\end{equation}
	which the expected accumulated reward starting from $(s, a^i)$ at agent $i$ under current global policy $\pi$.
	
	We further define the state-value function which tracks the expected accumulated reward starting from state $s$ under $\pi$
	\begin{equation}
	V^\pi(s)=\mathop{\mathbb{E}}\limits_{\pi}\left[\sum_{t=0}^\infty\gamma^t {R}(s_t,a_t)|s_0=s\right].
	\end{equation}
	It is easy to check that
	\begin{equation}
	\label{equ:vp}
	V^\pi(s)=\mathop{\mathbb{E}}\limits_{a^i\sim\pi^i}Q^i_{\pi}(s,a^i).
	\end{equation}
	
	Furthermore, $Q^i_{\pi}$ follows the Bellman equation
	\begin{equation}
	\label{equ:be}
	\begin{aligned}
	Q^{i}_{\pi}(s,a^i)
	=&\mathop{\mathbb{E}}\limits_{a^{-i}\sim \pi^{-i}}\left[ {R}(s,a)+\gamma\mathop{\mathbb{E}}\limits_{\tilde{s}\sim P}V^\pi(\tilde{s}) \right]\\
	=&\mathop{\mathbb{E}}\limits_{a^{-i}\sim \pi^{-i}}\left[ {R}(s,a)+\gamma \mathop{\mathbb{E}}\limits_{\substack{\tilde{s}\sim P\\\tilde{a}^i\sim\pi^i}}Q^i_{\pi}(\tilde{s},\tilde{a}^i)\right],
	\end{aligned}
	\end{equation}
	where we use superscript $-i$ to indicate the corresponding item of all the agents except agent $i$. For example, $a^{-i}$ and $\pi^{-i}$ in \eqref{equ:be} are the joint action and policy of all the agents except agent $i$.
	
	In the following sections, we will propose distributed MARL algorithms based on the multi-agent MDP and AC setting above.
	
	\section{MARL with discrete state-action-space MDP}
	\label{sec:fsMDP}
	In the scenario of discrete state-action-space MDP (referring to the MDP has finite state-action pairs in this paper), each agent $i$ maintains a Q-table, $Q^i\in \mathbb{R}^{|\mathcal{S}|\times|\mathcal{A}^i|}$, and a policy-table, $\zeta^i\in\mathbb{R}^{|\mathcal{S}|\times|\mathcal{A}^i|}$. The policy of agent $i$ is calculated by the softmax function
	\begin{equation}
	\label{equ:zeta}
	\pi^i(s,a^i)=\frac{\exp\left(\zeta^i(s,a^i)\right)}{\sum\limits_{\tilde{a}^i\in\mathcal{A}^i}\exp\left(\zeta^i(s,\tilde{a}^i)\right)}.
	\end{equation}
	
	In each step, agent $i$ takes action $a_t^i$ according to current state $s_t$ and local policy $\pi^i$. Then by applying joint action $a_t$, the system state will transit to $s_{t+1}$ and return the global reward $r_t$. The tuple $(s_t,a_t,r_t,s_{t+1})$ is used to update both $Q^i$ and $\zeta^i$, which correspond to policy evaluation step and policy improvement step respectively.
	
	\subsection{Policy Evaluation}
	\label{sec:tpe}
	
	In the policy evaluation step, we randomly initialize $Q^i$ as $Q_0^i$ and update it by
	\begin{equation}
	\label{equ:qupdate}
	\begin{aligned}
	&Q^i_{t+1}(s_t,a^i_t)\\=&Q_t^i(s_t,a_t^i)+\alpha_t^i(s_t,a_t^i)\left[r_t+\gamma Q_t^i(s_{t+1},\tilde{a}^i)-Q_t^i(s_t,a_t^i)\right],
	\end{aligned}
	\end{equation}
	where $\tilde{a}^i$ is a local action freshly sampled from $\pi^i$. $\alpha_t^i(s_t,a_t^i)\in[0,1]$ is the step size of the $(t+1)$-th update. Specially, only $Q^i(s_t,a^i_t)$ is updated and others, i.e., $Q^j(s_t,a^j_t), j\neq i$, remain the same. Notice that the policy is assumed to be fixed when analyzing the convergence of $Q^i$. The convergence can be guaranteed by Theorem \ref{the:pe}.
	
	\begin{theorem}
		\label{the:pe}
		Under Assumption \ref{ass:gt0} and given the update rule
		\begin{equation}
		\begin{aligned}
		&Q^i_{t+1}(s_t,a^i_t)\\=&Q_t^i(s_t,a_t^i)+\alpha_t^i(s_t,a_t^i)\left[r_t+\gamma Q_t^i(s_{t+1},\tilde{a}^i)-Q_t^i(s_t,a_t^i)\right],
		\end{aligned}
		\end{equation}
		$Q^i_t$ converges to $Q_\pi^i$ with probability $1$ (w.p.1), i.e., $\lim_{t\to\infty}Q_t^i=Q_\pi^i$, as long as for each state-action pair $(s,a^i)$ the step size satisfies
		\begin{equation}
		\sum\limits_{t=0}^\infty\alpha_t^i(s,a^i)=\infty,
		\end{equation}
		\begin{equation}
		\sum\limits_{t=0}^\infty\left(\alpha_t^i(s,a^i)\right)^2<\infty.
		\end{equation}
	\end{theorem}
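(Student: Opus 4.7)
The plan is to recognise that, for a fixed joint policy $\pi$, the update \eqref{equ:qupdate} is an asynchronous stochastic approximation driven by a $\gamma$-contractive Bellman operator, and then to invoke the standard convergence theorem for such schemes (in the spirit of Jaakkola--Jordan--Singh and Tsitsiklis). This reduces the multi-agent statement to a single-agent policy-evaluation result where the other agents are absorbed into the environment.

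First I would fix $\pi=\prod_j \pi^j$ and define, for agent $i$, the expected-SARSA Bellman operator
\begin{equation*}
(T^i Q)(s,a^i) = \mathbb{E}_{a^{-i}\sim\pi^{-i}}\bigl[R(s,a)\bigr] + \gamma\,\mathbb{E}_{a^{-i}\sim\pi^{-i},\,\tilde s\sim P,\,\tilde a^i\sim\pi^i}\bigl[Q(\tilde s,\tilde a^i)\bigr].
\end{equation*}
A short triangle-inequality argument gives $\|T^i Q - T^i Q'\|_\infty \le \gamma\,\|Q-Q'\|_\infty$, so $T^i$ is a contraction in the supremum norm, and comparison with the Bellman equation \eqref{equ:be} identifies its unique fixed point as $Q^i_\pi$.

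Next I would rewrite the update in error form. Setting $\Delta_t = Q_t^i - Q^i_\pi$ and
\begin{equation*}
F_t(s,a^i) = r_t + \gamma Q_t^i(s_{t+1},\tilde a^i) - Q^i_\pi(s,a^i),
\end{equation*}
the iteration on the visited entry reads $\Delta_{t+1}(s_t,a_t^i) = (1-\alpha_t^i)\Delta_t(s_t,a_t^i) + \alpha_t^i\, F_t(s_t,a_t^i)$. Conditioning on the history $\mathcal{F}_t$ up to time $t$, the tuple $(a_t^{-i}, s_{t+1}, \tilde a^i)$ provides an unbiased sample of $T^i Q_t^i$, so $\mathbb{E}[F_t(s,a^i)\mid\mathcal{F}_t] = (T^i Q_t^i)(s,a^i) - Q^i_\pi(s,a^i)$, whose sup-norm is bounded by $\gamma\|\Delta_t\|_\infty$ by the contraction property. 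Since $R$ is bounded and $\|Q_t^i\|_\infty \le \|Q^i_\pi\|_\infty + \|\Delta_t\|_\infty$, the conditional variance of $F_t$ admits a bound of the form $C_1 + C_2\|\Delta_t\|_\infty^2$, which is precisely what the contraction-based stochastic-approximation lemma requires.

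Finally, the Robbins--Monro step-size hypotheses, together with Assumption \ref{ass:gt0} (which keeps $\pi^i(s,a^i)>0$ for every entry and hence, through the ergodicity of the induced Markov chain, guarantees every state-action pair is visited infinitely often w.p.1), allow me to invoke the asynchronous stochastic-approximation theorem to conclude $\|\Delta_t\|_\infty \to 0$ w.p.1, i.e., $Q_t^i \to Q^i_\pi$ w.p.1. The main obstacle I expect is the asynchronous bookkeeping: only one entry of $Q^i$ is touched per step, so for each $(s,a^i)$ the effective step-size sequence is a random subsequence of $\{\alpha_t^i\}$. Handling this cleanly requires either citing the asynchronous form of the theorem directly or reindexing the updates by the visit count to each $(s,a^i)$ and verifying, via Assumption \ref{ass:gt0} and the ergodicity just noted, that the Robbins--Monro conditions are preserved along each such random subsequence.
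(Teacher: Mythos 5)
Your proposal follows essentially the same route as the paper's proof: cast the update as a stochastic approximation of the form $\Delta_{t+1}=(1-\alpha_t)\Delta_t+\alpha_t F_t$ with $\Delta_t=Q_t^i-Q_\pi^i$, show the induced expected-SARSA operator (the paper's $H^i$, your $T^i$) is a $\gamma$-contraction in the sup norm with fixed point $Q_\pi^i$, bound the conditional mean and variance of $F_t$, and invoke the Jaakkola--Jordan--Singh-type lemma. Your closing remark on the asynchronous bookkeeping and infinitely-often visitation is a point the paper treats only implicitly via Assumption \ref{ass:gt0}, but it does not change the argument.
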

	
	Before proving it, we introduce a lemma from stochastic approximation.
	\begin{lemma}[\cite{melo2001convergence}]
		\label{lem:sa}
		The random process $\{\Delta_t\}$ is defined as
		\begin{equation}
		\Delta_{t+1}(x)=(1-\alpha_t(x))\Delta_t(x)+\alpha_t(x)F_t(x),
		\end{equation}
		where $\alpha_t(x)$ is a scalar assigned to $x$ at time step $t$ and $F_t$ is another random process. Let $\mathcal{F}_t=\left\{F_i|i<t\right\}$, then $\Delta_t$
		converges to $0$ w.p.1, if the following assumptions are satisfied
		\begin{enumerate}
			\item\label{condition1} $0\leq\alpha_t\leq 1$,$\sum_{t=0}^{\infty}\alpha_t(x)=\infty$ and $\sum_{t=0}^{\infty}\alpha_t^2(x)<\infty$;
			\item\label{condition2} $||\mathbb{E}[F_t(x)|\mathcal{F}_t ]||_W\leq \gamma ||\Delta_t||_W$, with $\gamma<1$;
			\item\label{condition3} $var[F_t(x)|\mathcal{F}_t]\leq C(1+||\Delta_t||^2_W)$, for $C>0$.
		\end{enumerate}
	\end{lemma}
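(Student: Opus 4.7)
The plan is to reduce the claim to the Robbins--Siegmund almost-supermartingale convergence theorem by splitting $F_t$ into its conditional mean and a martingale-difference noise, and then exploiting the strict contraction in condition~\ref{condition2} together with the step-size summability in condition~\ref{condition1}.

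First I would decompose $F_t(x)=h_t(x)+n_t(x)$ with $h_t(x):=\mathbb{E}[F_t(x)\mid\mathcal{F}_t]$ and $n_t(x):=F_t(x)-h_t(x)$, so that $\{n_t(x)\}$ is a martingale-difference sequence whose conditional variance is at most $C(1+\|\Delta_t\|_W^2)$ by condition~\ref{condition3}, while condition~\ref{condition2} provides the contraction $|h_t(x)|\le\gamma\|\Delta_t\|_W$ (in the relevant weighted sense) with $\gamma<1$. Substituting into the defining recursion gives
$$\Delta_{t+1}(x)=(1-\alpha_t(x))\Delta_t(x)+\alpha_t(x)h_t(x)+\alpha_t(x)n_t(x).$$

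Next I would square pointwise, take conditional expectations, bound the cross-term via the contraction estimate and the noise term via the variance estimate, and convert to $V_t:=\|\Delta_t\|_W^2$. Collecting terms produces a Robbins--Siegmund inequality of the form $\mathbb{E}[V_{t+1}\mid\mathcal{F}_t]\le(1+a_t)V_t-b_t+c_t$, where $\sum_t a_t<\infty$ and $\sum_t c_t<\infty$ both follow from $\sum_t\alpha_t^2(x)<\infty$, and $b_t$ captures the contraction gap and is of order $(1-\gamma)\alpha_t(x)V_t$ up to lower-order corrections. Robbins--Siegmund then delivers $V_t\to V_\infty$ almost surely together with $\sum_t b_t<\infty$; combining the latter with $\sum_t\alpha_t(x)=\infty$ forces $V_\infty=0$, which is exactly $\Delta_t\to0$ w.p.1.

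The main obstacle is the coordinate-dependence of $\alpha_t(x)$: the per-coordinate contraction mixes the coordinate being updated with the full weighted norm $\|\Delta_t\|_W$, so a single scalar Lyapunov inequality is not immediate from one step of the recursion. I would handle this by applying Robbins--Siegmund componentwise on $\Delta_t(x)^2/W(x)^2$, exploiting that the index set for $x$ is finite in the intended applications so that coordinate-wise almost-sure convergence transfers to the maximum that defines $\|\cdot\|_W$. A secondary technicality is verifying that $\alpha_t(x)$ is $\mathcal{F}_t$-measurable so that the conditional expectations above are meaningful; this holds in the Q-learning use-cases since $\alpha_t(x)$ depends only on the visitation history through time $t$.
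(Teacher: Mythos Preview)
The paper does not actually prove this lemma; immediately after stating it, the authors remark that ``the proof can be found in \cite{jaakkola1994convergence}'' and then use it as a black box in the proof of Theorem~\ref{the:pe}. So there is no in-paper argument to compare your proposal against, only the cited source.

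That said, your outline has a genuine gap precisely at the point you flag as ``the main obstacle,'' and the fix you propose does not close it. Condition~\ref{condition2} bounds $|h_t(x)|$ by $\gamma\|\Delta_t\|_W$, the \emph{global} weighted norm, not by $|\Delta_t(x)|$. When you square the drift part of the recursion at a single coordinate you pick up a cross term of order $\alpha_t(x)\,|\Delta_t(x)|\,\|\Delta_t\|_W$, which is first order in $\alpha_t(x)$ and therefore not summable; it cannot be absorbed into the $a_t$ or $c_t$ slots of a Robbins--Siegmund inequality, and it carries no definite sign. Hence the per-coordinate recursion is not an almost-supermartingale in $\Delta_t(x)^2/W(x)^2$, and neither summing nor taking the maximum over finitely many $x$ afterwards repairs this, because the step sizes $\alpha_t(x)$ differ across coordinates and the argmax defining $\|\cdot\|_W$ can jump from step to step. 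The proof in \cite{jaakkola1994convergence} (and the closely related asynchronous stochastic-approximation arguments of Tsitsiklis and of Bertsekas--Tsitsiklis) does not attempt a one-shot Lyapunov bound: it first establishes almost-sure boundedness of $\|\Delta_t\|_W$ and then runs a rescaling/comparison argument against a deterministic contractive recursion, using the square-summability of $\alpha_t$ only to control the accumulated noise. If you want to keep your Robbins--Siegmund framing, you would need to invoke an asynchronous stochastic-approximation theorem that already handles coordinatewise step sizes under a sup-norm contraction, or else insert a separate boundedness step before the supermartingale argument.
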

	
	Lemma \ref{lem:sa} is a common tool for analyzing the convergence of reinforcement learning algorithms, e.g., \cite{melo2001convergence,jaakkola1994convergence}, and the proof can be found in \cite{jaakkola1994convergence}.
	
	Then we start the proof of Theorem \ref{the:pe}.
	
	\begin{proof}
		The updating rule \eqref{equ:qupdate} can be rewritten as
		\begin{equation}
		\label{equ:rqu}
		\begin{aligned}
		Q_{t+1}^i(s_t,a_t^i)=&\left(1-\alpha_t^i(s_t,a_t^i)\right)Q_t^i(s_t,a_t^i)\\
		&+\alpha_t^i(s_t,a_t^i)\left[r_t+\gamma Q_t^i(s_{t+1},\tilde{a}^i)\right].
		\end{aligned}
		\end{equation}
		We construct $\Delta_t(x)$, $F_t(x)$ in Lemma \ref{lem:sa} as
		\begin{equation}
		\Delta_t^i(s,a^i)=Q_t^i(s,a^i)-Q_\pi^i(s,a^i),
		\end{equation}
		\begin{equation}
		\label{equ:Ft}
		F_t^i(s,a^i)=R(s,a)+\gamma Q_t^i(\tilde{s},\tilde{a}^i)-Q_\pi^i(s_t,a_t^i),
		\end{equation}
		respectively. In \eqref{equ:Ft}, $\tilde{s}$ is the succeeding state of $s$ after acting as $a$, and $\tilde{a}^i$ is an action sampled from $\pi^i$ under state $\tilde{s}$. Then substituting them to \eqref{equ:rqu}, yields
		\begin{equation}
		\begin{aligned}
		\Delta_{t+1}^i(s_t,a_t^i)=\left(1-\alpha_t^i(s_t,a_t^i)\right)\Delta_t^i(s_t,a_t^i)
		+\alpha_t^i(s_t,a_t^i)F_t^i(s_t,a_t^i).
		\end{aligned}
		\end{equation}
		
		After that, we prove Theorem \ref{the:pe} by verifying the conditions in Lemma \ref{lem:sa}. Condition \ref{condition1} in Lemma \ref{lem:sa} has been naturally considered and the other 2 conditions will be verified here.
		
		To verify Condition \ref{condition2}, notice that
		\begin{equation}
		\label{equ:ef}
		\begin{aligned}
		&\mathbb{E}\left[F_t^i(s,a^i)\left|\mathcal{F}_t\right.\right]\\
		=&\mathop{\mathbb{E}}\limits_{a^{-i}\sim\pi^{-i}}\left[R(s,a)+\gamma\mathop{\mathbb{E}}\limits_{\substack{\tilde{s}\sim P\\\tilde{a}^i\sim\pi^i}}Q_t^i(\tilde{s},\tilde{a}^i)-Q_\pi^i(s,a^i)\right]\\
		=&\mathop{\mathbb{E}}\limits_{a^{-i}\sim\pi^{-i}}\left[R(s,a)+\gamma\mathop{\mathbb{E}}\limits_{\substack{\tilde{s}\sim P\\\tilde{a}^i\sim\pi^i}}Q_t^i(\tilde{s},\tilde{a}^i)\right]-Q_\pi^i(s,a^i),
		\end{aligned}
		\end{equation}
		which induce an operator $H^i$ on a function $q^i:\mathcal{S}\times\mathcal{A}^i\to \mathbb{R}$
		\begin{equation}
		H^iq^i(s,a^i)=\mathop{\mathbb{E}}\limits_{a^{-i}\sim\pi^{-i}}\left[R(s,a)+\gamma\mathop{\mathbb{E}}\limits_{\substack{\tilde{s}\sim P\\\tilde{a}^i\sim\pi^i}}q^i(\tilde{s},\tilde{a}^i)\right].
		\end{equation}
		The operator $H^i$ is a contraction operator, since that given any functions $q_1^i$ and $q_2^i$, we have
		\begin{equation}
		\begin{aligned}
		&||H^iq_1^i-H^iq_2^i||_\infty\\
		=&\max\limits_{(s,a^i)}\gamma\left|\mathop{\mathbb{E}}\limits_{a^{-i}\sim\pi^{-i}}\left[\mathop{\mathbb{E}}\limits_{\substack{\tilde{s}\sim P\\\tilde{a}^i\sim\pi^i}}\left(q^i_1(\tilde{s},\tilde{a}^i)-q^i_2(\tilde{s},\tilde{a}^i)\right)\right]\right|\\
		\leq&\max\limits_{(s,a^i)}\gamma\mathop{\mathbb{E}}\limits_{a^{-i}\sim\pi^{-i}}\left[\mathop{\mathbb{E}}\limits_{\substack{\tilde{s}\sim P\\\tilde{a}^i\sim\pi^i}}\left|q^i_1(\tilde{s},\tilde{a}^i)-q^i_2(\tilde{s},\tilde{a}^i)\right|\right]\\
		\leq&\max\limits_{(s,a^i)}\gamma\mathop{\mathbb{E}}\limits_{a^{-i}\sim\pi^{-i}}||q_1^i-q_2^i||_\infty\\
		=&\gamma||q^i_1-q^i_2||_\infty.
		\end{aligned}
		\end{equation}
		Furthermore, from the Bellman equation \eqref{equ:be}, it can be easily found that $Q^i_\pi$ is the fixed point of $H^i$, i.e., $H^iQ_\pi^i=Q_\pi^i$. Therefore, substituting it to \eqref{equ:ef}, we have
		\begin{equation}
		\mathbb{E}\left[F_t^i(s,a^i)|\mathcal{F}_t\right]=(H^iQ_t^i)(s,a^i)-(H^iQ_\pi^i)(s,a^i).
		\end{equation}
		The contraction operator $H^i$ guarantees that
		\begin{equation}
		\mathbb{E}\left[F_t^i(s,a^i)|\mathcal{F}_t\right]\leq \gamma||Q_t^i-Q_\pi^i||_\infty=\gamma||\Delta_t^i||_\infty,
		\end{equation}
		which means Condition \ref{condition2}) is satisfied.
		
		Then we check variance of $F_t^i$
		\begin{equation}
		\begin{aligned}
		var\left[F_t^i(s,a^i)|\mathcal{F}_t\right]=var\left[R(s,a)+\gamma Q_t^i(\tilde{s},\tilde{a}^i)\right].
		\end{aligned}
		\end{equation}
		Due to that $R$ is bounded, $Q^i_\pi$ is also bounded. Besides, $Q_t^i$ can be bounded by $||Q^i_\pi||_\infty+||\Delta_t^i||\infty$. Therefore, some $C>0$ must exist so that
		\begin{equation}
		var\left[F_t^i(s,a^i)|\mathcal{F}_t\right]\leq C(1+||\Delta_t^i||^2_\infty),
		\end{equation}
		which means that Condition \ref{condition3}) is satisfied and also concludes the proof.
	\end{proof}
	
	Theorem \ref{the:pe} shows that by recursively applying \eqref{equ:qupdate}, each agent's estimation $Q^i_t$ asymptotically converge to $Q^i_\pi$. The critics $Q^i$ will be used by the policy improvement step in the next section.
	
	\subsection{Policy Improvement}
	\label{sec:spi}
	In this section, we will derive the policy improvement algorithm by assuming that $Q^i_t$ has converged, i.e., $Q^i_\pi$ will be used to design the algorithm.
	
	In order to stabilize the training process, only one of the agents updates its policy in each iteration. Thus the agents will update their policies in turn, i.e., let $\mod$ denote the modular operator, only the agent $i=\mod(k, N)$ updates its policy at the $k$-th policy improvement step
	\begin{equation}
	\label{equ:pi1}
	\pi_{k+1}^i(s,\cdot)=\mathop{argmax}\limits_{\pi^i(s,\cdot)}\mathop{\mathbb{E}}\limits_{a^i\sim\pi^i}Q_{\pi_k}^i(s,a^i),~\forall s\in \mathcal{S},
	\end{equation}
	and others keeps original policy
	\begin{equation}
	\label{equ:pi2}
	\pi^j_{k+1}=\pi^j_k,~\forall j\neq i.
	\end{equation}
	
	By applying the process of policy improvement, the global policy will be reinforced, which is guaranteed by the theorem below.
	
	\begin{theorem}
		\label{the:pi}
		Let $\pi_k$ denote the global policy at step $k$ and $Q_{\pi_k}^i$ be the action-value function of agent $i$.
		By applying the policy improvement step defined in \eqref{equ:pi1} and \eqref{equ:pi2}, the policy will be reinforced to get higher expected future reward, i.e.,
		\begin{equation}
		V^{\pi_{k+1}}(s)\geq V^{\pi_{k}}(s),~\forall s\in \mathcal{S}.
		\end{equation}
		and the policy $\pi$ will converge.
	\end{theorem}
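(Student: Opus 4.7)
The plan is to adapt the classical single-agent policy improvement theorem to the coordinate-wise update used here. At step $k$ only agent $i=\mathrm{mod}(k,N)$ modifies $\pi^i$, while \eqref{equ:pi2} fixes $\pi^{-i}_{k+1}=\pi^{-i}_k$. I would first invoke \eqref{equ:vp} to write $V^{\pi_k}(s)=\mathbb{E}_{a^i\sim\pi_k^i}Q_{\pi_k}^i(s,a^i)$, and then use the defining maximization \eqref{equ:pi1} to obtain the one-step inequality $V^{\pi_k}(s)\leq \mathbb{E}_{a^i\sim\pi_{k+1}^i}Q_{\pi_k}^i(s,a^i)$ for every $s\in\mathcal{S}$.

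Next I would expand the right-hand side via the Bellman equation \eqref{equ:be}. Because $\pi_{k+1}^{-i}=\pi_k^{-i}$, the expectation over $a^{-i}$ in \eqref{equ:be} may equivalently be taken under $\pi_{k+1}^{-i}$, and combining it with the outer expectation over $a^i\sim\pi_{k+1}^i$ collapses both into a single expectation under $\pi_{k+1}$, yielding the key inequality $V^{\pi_k}(s)\leq \mathbb{E}_{a\sim\pi_{k+1}}[R(s,a)+\gamma\,\mathbb{E}_{s'\sim P}V^{\pi_k}(s')]$. I would then unroll this recursion by repeatedly substituting the same inequality for $V^{\pi_k}(s')$, obtaining after $n$ stages the bound $V^{\pi_k}(s)\leq \mathbb{E}_{\pi_{k+1}}[\sum_{t=0}^{n-1}\gamma^t R(s_t,a_t)]+\gamma^n\,\mathbb{E}_{\pi_{k+1}}[V^{\pi_k}(s_n)]$. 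Boundedness of $R$ (hence of $V^{\pi_k}$) makes the $\gamma^n$-tail vanish as $n\to\infty$, and the partial sums converge by definition to $V^{\pi_{k+1}}(s)$, so $V^{\pi_{k+1}}(s)\geq V^{\pi_k}(s)$ for all $s$.

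For the convergence part, the above shows that $\{V^{\pi_k}(s)\}_k$ is monotone nondecreasing in $k$ for each $s$ and uniformly bounded above by $\|R\|_\infty/(1-\gamma)$, so it admits a pointwise limit. Since only one coordinate is changed per step and the update is a greedy maximization of $\mathbb{E}_{\pi^i}Q_{\pi_k}^i(s,\cdot)$, if $V^{\pi_k}$ stops increasing over a full cycle of $N$ consecutive steps then no agent can strictly improve the common value by a unilateral move, which I would use to argue that the incumbent policy itself stabilizes (any subsequent update simply reselects a maximizer).

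The step I expect to be most delicate is the recursive unrolling: the interchange of the nested conditional expectations over the successive $(s_t,a_t)$ must be made explicit, and the $\gamma^n$-weighted tail must be shown uniformly bounded in $n$, so that the limit $n\to\infty$ is legitimate. A related subtlety worth flagging is that the unconstrained maximizer in \eqref{equ:pi1} typically concentrates on a single argmax action and so lies on the boundary of the simplex, in tension with Assumption \ref{ass:gt0}; I would resolve this by treating \eqref{equ:pi1} as the idealized target approached by the softmax parameterization \eqref{equ:zeta}, which keeps the strict-positivity condition required by Theorem \ref{the:pe} intact throughout the alternating iterations.
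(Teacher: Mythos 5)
Your proof follows essentially the same route as the paper's: the one-step inequality from the greedy update \eqref{equ:pi1} combined with \eqref{equ:vp}, the Bellman expansion that collapses into a single expectation under $\pi_{k+1}$ because $\pi^{-i}_{k+1}=\pi^{-i}_k$, the recursive unrolling to reach $V^{\pi_{k+1}}(s)\geq V^{\pi_k}(s)$, and the monotone bounded sequence argument for convergence. Your explicit handling of the vanishing $\gamma^n$ tail and your remark on reconciling the argmax in \eqref{equ:pi1} with Assumption \ref{ass:gt0} are refinements the paper leaves implicit, but the argument is the same.
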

	
	\begin{proof}
		By applying \eqref{equ:pi1}, for any $s_0\in\mathcal{S}$, we have
		\begin{equation}
		\begin{aligned}
		V^{\pi_k}(s_0)=\mathop{\mathbb{E}}\limits_{a^i_0\sim\pi_k^i}Q_{\pi_k}^i(s_0,a^i_0)\leq \mathop{\mathbb{E}}\limits_{a^i_0\sim\pi_{k+1}^i}Q_{\pi_k}^i(s_0,a^i_0).
		\end{aligned}
		\end{equation}
		Substituting \eqref{equ:be} into $Q_{\pi_k}^i$, we have
		\begin{equation}
		\begin{aligned}
		&\mathop{\mathbb{E}}\limits_{a^i_0\sim\pi_{k+1}^i}Q_{\pi_k}^i(s_0,a^i_0) \\
		=&\mathop{\mathbb{E}}\limits_{\substack{a^i_0\sim \pi_{k+1}^i\\a_0^{-i}\sim\pi^{-i}_k}}\left[R(s_0,a_0)+\gamma\mathop{\mathbb{E}}\limits_{s_1\sim P}V^{\pi_k}(s_1)\right]\\
		=&\mathop{\mathbb{E}}\limits_{\substack{a^i_0\sim \pi_{k+1}^i\\a_0^{-i}\sim\pi^{-i}_{k+1}}}\left[R(s_0,a_0)+\gamma\mathop{\mathbb{E}}\limits_{s_1\sim P}V^{\pi_k}(s_1)\right]\\
		=&\mathop{\mathbb{E}}\limits_{a_0\sim \pi_{k+1}}\left[R(s_0,a_0)+\gamma\mathop{\mathbb{E}}\limits_{s_1\sim P}V^{\pi_k}(s_1)\right].
		\end{aligned}
		\end{equation}
		Thus the relation between $s_0$ and its succeeding state $s_1$ is established
		\begin{equation}
		\label{equ:s0}
		V^{\pi_k}(s_0)\leq \mathop{\mathbb{E}}\limits_{a_0\sim \pi_{k+1}}\left[R(s_0,a_0)+\gamma\mathop{\mathbb{E}}\limits_{s_1\sim P}V^{\pi_k}(s_1)\right].
		\end{equation}
		Similarly, replacing $s_0$ with $s_1$ and replacing $s_1$ with $s_2$, we have
		\begin{equation}
		\label{equ:s1}
		V^{\pi_k}(s_1)\leq \mathop{\mathbb{E}}\limits_{a_1\sim \pi_{k+1}}\left[R(s_1,a_1)+\gamma\mathop{\mathbb{E}}\limits_{s_2\sim P}V^{\pi_k}(s_2)\right].
		\end{equation}
		Substituting it to \eqref{equ:s0}, it can be seen that
		\begin{equation}
		\begin{aligned}
		&V^{\pi_k}(s_0)\\
		\leq &\mathop{\mathbb{E}}\limits_{(a_0,a_1)\sim \pi_{k+1}}\left[R(s_0,a_0)+\gamma R(s_1,a_1)+\gamma^2\mathop{\mathbb{E}}\limits_{s_2\sim P}V^{\pi_k}(s_2)\right].
		\end{aligned}
		\end{equation}
		Repeating this process by expanding $V^{\pi_k}(s_2)$, $V^{\pi_k}(s_3)$ and so on, the following relation can be obtained
		\begin{equation}
		V^{\pi_k}(s_0)\leq \mathop{\mathbb{E}}\limits_{a_t\sim \pi_{k+1}} \sum\limits_{t=0}^\infty\gamma^t R(s_t,a_t) = V^{\pi_{k+1}}(s_0).
		\end{equation}
		Therefore, for any state $s\in \mathcal{S}$, $V^{\pi_{k+1}}(s)\geq V^{\pi_{k}}(s)$, which means the policy is ''improved".
		
		From above results, $\left\{V^{\pi_k}(s)\right\}$ is a non-decreasing sequence, and obviously it has upper bound, since $R(s,a)$ is bounded. Thus the sequence $\left\{V^{\pi_k}(s)\right\}$ will converge, which means the policy $\pi$ will converge, and this also concludes the proof.
	\end{proof}
	
	Theorem \ref{the:pi} guarantees that $\pi$ is monotonically improved, but unfortunately, as most of the policy gradient algorithms, e.g., policy gradient \cite{peters2006policy} and DDPG \cite{silver2014deterministic}, it cannot be guaranteed to converge to the optimal policy. In our experiments, we will show that the sub-optimal policy also performs competitively.
	
	Combining the policy evaluation and policy improvement derived above, we propose the algorithm for real implementation in Algorithm \ref{alg:fsMDP}. We replace $Q_\pi^i$ with its local estimation $Q_t^i$ in the policy improvement step, and typically, the learning rate of the policy evaluation step $\alpha_t^i$ is greatly larger than that of the policy improvement step $\beta_t^i$, which introduce a two time-scales optimization. The actor is optimized along the slower time scale, which guarantees that the global policy is stable in the process of policy evaluation. Furthermore, the critic is optimized along the faster time scale, which leads to more sufficient learning and decreases the estimation error. Besides, all of the agents update their local policies simultaneously, rather than sequentially, to accelerate the learning process. Since that the policies are updated along the slower time scale, although they change policy simultaneously, the environment for each agent is also stable enough for policy evaluation.
	
	\begin{algorithm}
		
		\caption{MARL with discrete state-action-space MDP}
		\begin{algorithmic}
			\label{alg:fsMDP}
			\REQUIRE initial policy $\pi^i_0$ and critic $Q^i_0$, $1\leq i\leq N$; initial state $s_0$; iteration counter $t\gets 0$; step size of critics $\alpha^i_t$ and step size of actor $\beta_t^i$; the threshold for $\zeta$, $T_m$ and $T_M$, which qualify $T_m<T_M$.
			\REPEAT
			\STATE Each agent samples action $a_t^i\sim \pi^i_t(s_t,\cdot)$ to generate $a_t$.
			\STATE Execute $a_t$ and observe $s_{t+1}$, $r_t$.
			\STATE Each agent $i$ executes one step of policy evaluation by following \eqref{equ:qupdate}.
			\STATE Each agent $i$ executes one step of policy improvement
			\begin{equation}
			\label{equ:poi1}
			v^i_t(s_t)=\sum\limits_{a^i\in\mathcal{A}^i}\pi_t^i(s_t,a^i)Q^i_t(s_t,a^i)
			\end{equation}
			\begin{equation}
			\label{equ:poi2}
			\zeta^i(s_t,\cdot)\gets\zeta^i(s_t,\cdot)+\beta_t^i\left[\pi^i_t(s_t,\cdot)\odot\left(Q_t^i(s_t,\cdot)-v^i_t(s_t)\right)\right]
			\end{equation}
			\begin{equation}
			\label{equ:poi3}
			\zeta^i(s_t,\cdot)\gets clip\left\{\zeta^i(s_t,\cdot), T_m, T_M\right\}
			\end{equation}
			\STATE Set iteration counter $t\gets t+1$.
			\UNTIL{Convergence}
		\end{algorithmic}
	\end{algorithm}
	
	Finally, let us explain the policy improvement step \eqref{equ:poi1}, \eqref{equ:poi2} and \eqref{equ:poi3}. Different from \eqref{equ:pi1}, where we directly update the policy to the optimal one under the current situation, here we slowly update the policy $\pi^i$ by gradient ascent. Let
	\begin{equation}
	J(\zeta^i(s_t,\cdot))=\mathop{\mathbb{E}}\limits_{a^i\sim\pi^i_t}Q_t^i(s_t,a^i),
	\end{equation}
	where the relation between $\zeta^i$ and $\pi^i_t$ is defined in \eqref{equ:zeta}. Then the gradient can be calculated by
	\begin{equation}
	\label{equ:graj}
	\begin{aligned}
	\nabla_{\zeta^i}J(\zeta^i(s_t,\cdot))
	=&\nabla_{\zeta^i}\sum\limits_{a^i\in\mathcal{A}^i}\pi_t^i(s_t,a^i)Q_t^i(s_t,a^i)\\
	=&\sum\limits_{a^i\in\mathcal{A}^i}\nabla_{\zeta^i}\pi_t^i(s_t,a^i)Q_t^i(s_t,a^i)\\
	=&\sum\limits_{a^i\in\mathcal{A}^i}\pi_t^i(s_t,a^i)Q_t^i(s_t,a^i)\nabla_{\zeta^i}\log \pi_t^i(s_t,a^i) ,
	\end{aligned}
	\end{equation}
	where the second equation is reached by replacing $f(x)$ with $\pi_t^i(s_t,a^i)$ in $\nabla log(f(x))=\frac{\nabla f(x)}{f(x)}$. The item $\pi_t^i(s_t,a^i)$ is calculated by the soft max function \eqref{equ:zeta}. Thus the gradient can be calculated by
	\begin{equation}
	\label{equ:glog}
	\begin{aligned}
	&\nabla_{\zeta^i}\log \pi_t^i(s_t,a^i)\\
	=&\nabla_{\zeta^i}\zeta^i(s_t,a^i) - \nabla_{\zeta^i}\log\left\{\sum\limits_{\tilde{a}^i\in\mathcal{A}^i}\exp\left\{\zeta^i(s_t,\tilde{a}^i)\right\}\right\}\\
	=&e_{a^i}-\pi^i_t(s_t,\cdot),
	\end{aligned}
	\end{equation}
	where $e_{a^i}$ is an vector in $\mathbb{R}^{|\mathcal{A}^i|}$, whose item in the dimension corresponding to $a^i$ is 1 and others are 0. Then, substituting \eqref{equ:glog} into \eqref{equ:graj} and defining $v_t^i(s_t)$ as \eqref{equ:poi1}, the gradient can be derived by
	\begin{equation}
	\nabla_{\zeta^i}J(\zeta^i(s_t,\cdot))=\pi^i_t(s_t,\cdot)\odot\left(Q_t^i(s_t,\cdot)-v^i_t(s_t)\right).
	\end{equation}
	Therefore, \eqref{equ:poi2} is one step of updating the policy $\pi^i$ with gradient ascent. Finally, \eqref{equ:poi3} is a clip-function that guarantees that each action could be sampled with a positive probability.
	
	In this section, the algorithm for dealing with discrete state-action-space MDP is proposed, and the convergence analysis is given. Here we use tables to store both the Q-function and policy, and the algorithms for policy evaluation and policy improvement are designed correspondingly. However, in more general cases, where either the state space or action space is continuous, the critic $Q^i_\pi(s,a^i):\mathcal{S}\times\mathcal{A}^i\to\mathbb{R}$ and policy $\pi^i(s,a^i):\mathcal{S}\times\mathcal{A}^i\to\mathbb{R}$ cannot be stored by tables. Thus we use function approximation, e.g. NNs, to deal with continuous state-action-space scenarios. The policy evaluation and policy improvement algorithms will also be extended.
	
	\section{MARL with continuous state-action-space MDP}
	\label{sec:csMDP}
	Similar to Section \ref{sec:fsMDP}, let us discuss the policy evaluation and policy improvement algorithms respectively in continuous state-action-space MDP. Besides, in order to improve the data efficiency, we use replay buffers to store historical experience, which will be reused in later updating. Different from Algorithm \ref{alg:fsMDP}, which only uses the samples of current step to update, in order to reuse items in the replay buffers, an additional consensus mechanism is designed to balance the difference between historical and current policies by exchanging information among the communication network $\mathcal{G}$.
	
	When dealing with continuous state-action-space MDP, tables are invalid to represent policies and critics, we instead use NNs to store them. Thus in this section, the notations are slightly different from the previous parts. We use $\pi_\theta$ to denote the global policy parameterized by $\theta$. Similarly, local policy is denoted by $\pi_\theta^i$ and the critic of agent $i$ is parameterized by $\phi^i$, i.e., $Q_{\pi_\theta}^{\phi^i}$.
	
	In this section, let us start by introducing the replay buffers and the consensus mechanism. After that, the policy evaluation and policy improvement mechanism will be introduced.
	
	\subsection{Decentralized Replay Buffer}
	\label{sec:rb}
	We start by introducing the decentralized replay buffer. Similar to most of the centralized off-policy reinforcement learning algorithms, e.g. DQN \cite{mnih2013playing}, historical experiences of agent $i$ at time step $t$ is recorded as $e_t=(s_t,a^i_t,r_t,s_{t+1})$, and each agent is assigned with a replay buffer $D^i=\left\{e_t,e_{t-1},...,e_{t-M+1}\right\}$ to store its own experiences, where $M$ is the memory size.
	
	In the later analysis of Section \ref{sec:pi}, the historical experiences in replay buffer need to be reweighed to evaluate the current policy, and the coefficients for reweighing is related to the difference between current policy $\pi_{\theta}$ and historical policy $\pi_{\hat{\theta}}$. Here we mainly introduce how to estimate the coefficients. Specifically, for each historical experience $e_t$, log-form weight for importance sampling $c_{e_t}^i=\log \frac{\pi_\theta^{-i}(s_t,a_t^{-i})}{\pi_{\hat{\theta}}^{-i}(s_t,a_t^{-i})}$ will be estimated by information exchanging through the communication network $\mathcal{G}$. Let
	\begin{equation}
	\beta_{e_t}^i=\log \frac{\pi_{\theta}^i(s_t, a_t^i)}{\pi_{\hat{\theta}}^i(s_t,a_t^i)},
	\end{equation}
	which can be calculated locally, and the coefficient $c_{e_t}^i$ can be rewritten as
	\begin{equation}
	\label{equ:cet}
	\begin{aligned}
	c_{e_t}^i&=\log \frac{\pi_\theta^{-i}(s_t,a_t^{-i})}{\pi_{\hat{\theta}}^{-i}(s_t,a_t^{-i})}\\
	&=\sum\limits_{j=1}^N\log \frac{\pi_\theta^{j}(s_t,a_t^{j})}{\pi_{\hat{\theta}}^{j}(s_t,a_t^{j})}-\log\frac{\pi_\theta^{i}(s_t,a_t^{i})}{\pi_{\hat{\theta}}^{i}(s_t,a_t^{i})}\\
	&=\sum\limits_{j=1}^N \beta_{e_t}^j - \beta_{e_t}^i,
	\end{aligned}
	\end{equation}
	where $\beta_{e_t}^i$ can be calculated locally and $\sum_{j=1}^N \beta_{e_t}^j$ can be estimated by Lemma \ref{lem:avc}.
	
	\begin{lemma}[\cite{xiao2007distributed}]
		\label{lem:avc}
		For a $N$-agents system connected by the communication network $\mathcal{G}=\left\{\mathcal{X},\mathcal{E}\right\}$, define $\mathcal{N}_i = \left\{j|(i,j)\in \mathcal{E}\right\}$ and $|\mathcal{N}_i|$ as the size of $\mathcal{N}_i$, i.e., the degree of node $i$. Construct a matrix kernel $W\in\mathbb{R}^{N}$
		\begin{equation}
		W_{ij}=\left\{
		\begin{aligned}
		&\frac{1}{d+1}& &i\neq j~and~(i,j)\in\mathcal{E}\\
		&1-\frac{|\mathcal{N}_i|}{d+1}& &i=j\\
		&0& & i\neq j ~ and ~ (i,j)\notin \mathcal{E}
		\end{aligned}
		\right.,
		\end{equation}
		where $d=\max\limits_{i}|\mathcal{N}_i|$ is the degree of $\mathcal{G}$. Then for any initial value $x_0=(x_0^1,x_0^2,...,x_0^N)\in \mathbb{R}^N$, by recursively applying
		\begin{equation}
		\label{equ:xu}
		x_{t+1}=Wx_t,
		\end{equation}
		each agent's local estimation $x^i_t$ will converge to the average of $x_0$, i.e.,
		\begin{equation}
		\lim\limits_{t\to \infty}x_t^i=\frac{1}{N}\sum\limits_{j=1}^Nx_0^j.
		\end{equation}
	\end{lemma}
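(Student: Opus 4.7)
The plan is to show that the matrix $W$ is symmetric and doubly stochastic with second-largest-magnitude eigenvalue strictly less than one, from which convergence of $W^{t}$ to the rank-one averaging matrix $\frac{1}{N}\mathbf{1}\mathbf{1}^{T}$ follows by spectral decomposition, and hence $x_{t}=W^{t}x_{0}$ converges coordinate-wise to the average $\frac{1}{N}\sum_{j} x_{0}^{j}$.

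First I would check the basic algebraic properties of $W$. Symmetry is immediate: if $(i,j)\in\mathcal{E}$ then by construction $W_{ij}=W_{ji}=\frac{1}{d+1}$, and otherwise both off-diagonal entries are zero. Row stochasticity follows from $\sum_{j}W_{ij}=\frac{|\mathcal{N}_{i}|}{d+1}+\bigl(1-\frac{|\mathcal{N}_{i}|}{d+1}\bigr)=1$, and combined with symmetry this gives double stochasticity. Nonnegativity of the diagonal requires $|\mathcal{N}_{i}|\leq d<d+1$, which holds by the choice of $d$, so $W_{ii}>0$ for every $i$. In particular, $W\mathbf{1}=\mathbf{1}$, so $\mathbf{1}$ is an eigenvector with eigenvalue $1$, and $\mathbf{1}^{T}W=\mathbf{1}^{T}$ preserves the sum $\mathbf{1}^{T}x_{t}=\mathbf{1}^{T}x_{0}$ at every iteration, which pins down the only possible limit to be the average.

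Next I would establish that every other eigenvalue of $W$ lies strictly inside the unit disk. Since $\mathcal{G}$ is connected, $W$ (viewed as the transition matrix of a Markov chain) is irreducible; since every diagonal entry is strictly positive, the chain is also aperiodic. Perron-Frobenius then guarantees that the eigenvalue $1$ is simple and every other eigenvalue $\lambda$ satisfies $|\lambda|<1$. Because $W$ is symmetric, it is orthogonally diagonalizable, so it admits a spectral decomposition $W=\frac{1}{N}\mathbf{1}\mathbf{1}^{T}+\sum_{k\geq 2}\lambda_{k}u_{k}u_{k}^{T}$ with orthonormal $u_{k}\perp\mathbf{1}$ and $|\lambda_{k}|<1$. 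Powering gives $W^{t}=\frac{1}{N}\mathbf{1}\mathbf{1}^{T}+\sum_{k\geq 2}\lambda_{k}^{t}u_{k}u_{k}^{T}\to\frac{1}{N}\mathbf{1}\mathbf{1}^{T}$ as $t\to\infty$, and applying this to $x_{0}$ yields the claimed coordinate-wise limit.

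The main obstacle is the strict sub-dominance of the second eigenvalue, which is where connectedness of $\mathcal{G}$ and positivity of the diagonal enter essentially: without connectedness $W$ would be reducible and consensus would hold only within each connected component, while without the strictly positive self-loops $W_{ii}$ one would have to rule out a $-1$ eigenvalue coming from a bipartite-like periodic structure. Both issues are handled cleanly by the Metropolis-style construction with denominator $d+1$ rather than $d$, so once those properties are in place the spectral argument finishes the proof.
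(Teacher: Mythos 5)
Your proof is correct and complete: the verification that $W$ is symmetric, doubly stochastic, and has strictly positive diagonal, combined with irreducibility from connectedness of $\mathcal{G}$ and the resulting simplicity of the eigenvalue $1$, gives $W^{t}\to\frac{1}{N}\mathbf{1}\mathbf{1}^{T}$ by the spectral decomposition exactly as you argue. The paper itself does not prove this lemma but defers to the cited references, and your spectral argument is essentially the standard proof given there, so there is nothing further to reconcile.
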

	
	The proof of Lemma \ref{lem:avc} can be found in \cite{xiao2007distributed} and \cite{scherber2004locally}.
	
	If $x_0^i$ in Lemma \ref{lem:avc} is initialized as $\beta_{e_t}^i$ and update by \eqref{equ:xu} at each time step, average of $\left\{\beta_{e_t}^j, j=1,2,\cdots,N\right\}$ can be evaluate by each agent. Therefore, the quantity $\sum_{j=1}^N \beta_{e_t}^j$ and $c_{e_t}^i$ in \eqref{equ:cet} can be calculated locally.
	
	Because that the policies update in each step, $\beta_{e_t}^i$ also changes. Thus for each experience $e_t$, the $\beta_{e_t}^i$ and $x_{e_t}^i$ update simultaneously, where $x_{e_t}^i$ is the $x_t^i$ assigned for $e_t$. We design the following two steps to update them.
	\begin{enumerate}
		\item Firstly, update $\beta_{e_t}^i$ and $x_{e_t}^i$ with local policy
		\begin{equation}
		\label{equ:bu1}
		\beta_{e_t}^i\gets \log \frac{\pi_{\theta}^i(s_t, a_t^i)}{\pi_{\hat{\theta}}^i(s_t,a_t^i)},
		\end{equation}
		\begin{equation}
		\label{equ:bu2}
		x_{e_t}^i \gets x_{e_t}^i + \left(\beta_{e_t}^i - \hat{\beta}_{e_t}^i\right),
		\end{equation}
		where $\hat{\beta}_{e_t}^i$ is the old value of $\beta_{e_t}^i$ in last step.
		\item Then, execute one step of consensus
		\begin{equation}
		\label{equ:bu3}
		x^i_{e_t}\gets \sum\limits_{j=1}^NW_{ij}x^j_{e_t},~\forall i.
		\end{equation}
	\end{enumerate}
	
	In the first step, both $\beta_{e_t}^i$ and $x_{e_t}^i$ are updated with local information, which means that it can be executed in a distributed manner. Further, in the second step, each agent update $x^i_{e_t}$ with only information from neighbors and itself, since for all the agents $j$ who do not communicate with agent $i$ directly, $W_{ij}=0$. Therefore, the replay buffers can be updated in a distributed manner without violating the rule of privacy.
	
	Notice that in Section \ref{sec:fsMDP}, there is no need to communicate among the agents to reach a consensus, since both the actor and critic are updated with the latest experience generated by the current policies. However, in the continuous state-action-space case, in order to reuse the experience in the replay buffers, the agents should communicate among the graph $\mathcal{G}$ to propagate their local message, in which way to get the weight for each record.
	
	Based on the replay buffer introduced here, we design the policy evaluation and policy improvement algorithms in the next sections.
	
	\subsection{Policy Evaluation}
	\label{sec:pi}
	In this section, we will derive the policy evaluation process under the decentralized setting.
	
	In Section \ref{sec:tpe}, we have derived the policy evaluation algorithms for discrete state-action-space MDP and tabular setting. Here we mainly extend \eqref{equ:qupdate} to fit for NNs-approximation scenarios.
	
	Let us rewrite \eqref{equ:qupdate} as
	\begin{equation}
	\label{equ:qupdate2}
	\begin{aligned}
	Q^i_{t+1}(s_t,a^i_t)=Q_t^i(s_t,a_t^i)+\alpha_t^i(s_t,a_t^i)\left[y_t^i-Q_t^i(s_t,a_t^i)\right],
	\end{aligned}
	\end{equation}
	where $y_t^i = r_t+\gamma Q_t^i(s_{t+1},\tilde{a}^i)$.
	
	Then \eqref{equ:qupdate2} can be seen as one step of updating to minimize the error between $y_t^i$ and $Q^i_t(s_t,a_t^i)$.
	
	When using NNs, this step will be realized by minimizing the loss function
	\begin{equation}
	\label{equ:L}
	L^i(\phi^i) = \frac{1}{2}\mathop{\mathbb{E}}\limits_{(s,a^i)\sim D^i}\left[(Q_{\pi_\theta}^{\phi^i}(s,a^i) - y_t^i)^2\right],
	\end{equation}
	where by slightly abusing of notation, $y_t^i$ in following parts denotes the target value calculated by
	\begin{equation}
	y_t^i=\mathop{\mathbb{E}}\limits_{a^{-i}\sim \pi_\theta^{-i}}\left[R(s,a)+\gamma \mathop{\mathbb{E}}\limits_{\substack{
			\tilde{s}\sim P\\
			\tilde{a}^i\sim\pi^i_\theta
	}}Q_{\pi_\theta}^{\hat{\phi}^i}(\tilde{s},\tilde{a}^i)\right],
	\end{equation}
	where $Q_{\pi_\theta}^{\hat{\phi}^i}$ is the target NNs for stabilizing the training process. Similar technique also can be found in DQN \cite{mnih2015human}, DDPG \cite{silver2014deterministic}. In fact, \eqref{equ:L} is constructed by the residual error of \eqref{equ:be}.
	
	In each step, we update $\phi^i$ by one step of gradient descent:
	\begin{equation}
	\begin{aligned}
	\nabla_{\phi^i}L^i(\phi^i)=\mathop{\mathbb{E}}\limits_{(s,a^i)\sim D^i}\left[\left(Q_{\pi_\theta}^{\phi^i}(s,a^i) - y_t^i\right)\nabla_{\phi^i}Q_{\pi_\theta}^{\phi^i}(s,a^i)\right],
	\end{aligned}
	\end{equation}
	which can be rewritten as
	\begin{equation}
	\label{equ:gL}
	\begin{aligned}
	\nabla_{\phi^i}L^i(\phi^i)=\mathop{\mathbb{E}}\limits_{\substack{(s,a^i)\sim D^i\\a^{-i}\sim \pi^{-i}_{\theta}}}\left[\left(Q_{\pi_\theta}^{\phi^i}(s,a^i) -q^i(s,a) \right)\nabla_{\phi^i}Q_{\pi_\theta}^{\phi^i}(s,a^i)\right],
	\end{aligned}
	\end{equation}
	where
	\begin{equation}
	\label{equ:q}
	q^i(s,a)=\left(R(s,a)+\gamma \mathop{\mathbb{E}}\limits_{\substack{
			\tilde{s}\sim P\\
			\tilde{a}^i\sim\pi^i_\theta
	}}Q_{\pi_\theta}^{\hat{\phi}^i}(\tilde{s},\tilde{a}^i)\right).
	\end{equation}
	
	The expectation terms in \eqref{equ:gL} are hard to be calculated precisely and usually estimated by samples taken from corresponding distributions.
	
	We intend to evaluate it with samples in the replay buffer $D^i$. Thus, we adjust the distribution of samples in $D^i$ to fit the distributions in \eqref{equ:gL} and \eqref{equ:q} by importance sampling. First, the two distributions in the expectation of $q^i(s, a)$. $\tilde{s}\sim P$ is naturally satisfied by samples in $D^i$, since that the system dynamics are independent of policies. $\tilde{a}^i\sim\pi^i_\theta$ can be satisfied by taking samples freshly from current local policy $\pi_\theta^i$ without calling for others' policies. Thus the expectation terms in $q^i(s,a)$ can be evaluated by samples from the replay buffer. Furthermore, the first distribution $(s,a^i)\sim D^i$ is also naturally satisfied. However, the second distribution $a^{-i}\sim\pi_\theta^{-i}$ needs to be evaluated by samples taken from current policy $\pi_\theta^{-i}$, but samples in $D^i$ is taken from historical policies, which means that samples in replay buffers cannot be directly used to estimate $\nabla_{\phi^i}L^i(\phi^i)$. We use the technique of importance sampling to solve it.
	
	Here we firstly introduce the importance sampling.
	
	\begin{lemma}[\cite{glynn1989importance}]
		\label{lem:is}
		There are two distributions over a random variable $X$, whose probability density functions are $P(X)$ and $Q(X)$. If $Q(X) > 0$ holds for any value of $X$, then for a function $f(x)$, the following relation holds
		\begin{equation}
		\mathop{\mathbb{E}}\limits_{P(X)}\left[f(X)\right]=\mathop{\mathbb{E}}\limits_{Q(X)}\left[f(X)\frac{P(X)}{Q(X)}\right]\approx \frac{1}{n}\sum\limits_{i=1}^n c(x^i)f(x^i).
		\end{equation}
		where $c(x^i)=\frac{P(x^i)}{Q(x^i)}$ is the coefficient on sample $x^i$.
	\end{lemma}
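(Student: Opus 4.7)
The plan is to establish the identity by a change of measure on the integral defining the expectation, then obtain the sample approximation by a direct application of the strong law of large numbers. First I would write $\mathbb{E}_{P(X)}[f(X)]=\int f(x)P(x)\,dx$, and since the positivity assumption $Q(x)>0$ makes the ratio $P(x)/Q(x)$ well-defined on the support of $X$, I would multiply and divide the integrand by $Q(x)$ to obtain $\int f(x)\frac{P(x)}{Q(x)}Q(x)\,dx = \mathbb{E}_{Q(X)}\bigl[f(X)P(X)/Q(X)\bigr]$. This yields the first equality with no further work; for the discrete case it is the same argument with sums replacing integrals.

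For the Monte Carlo approximation, I would draw $n$ i.i.d.\ samples $x^1,\dots,x^n$ from $Q$ and consider the random variables $Y_k = c(x^k)f(x^k) = \frac{P(x^k)}{Q(x^k)}f(x^k)$. Their shared expectation under $Q$ is exactly $\mathbb{E}_{Q(X)}\bigl[f(X)P(X)/Q(X)\bigr]$, so as long as this expectation is finite the strong law of large numbers gives $\frac{1}{n}\sum_{k=1}^n Y_k \to \mathbb{E}_{P(X)}[f(X)]$ almost surely, which is precisely the approximation symbol $\approx$ in the lemma's statement.

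I do not expect a real obstacle here, since the argument is a textbook change of measure followed by a classical limit theorem. The only subtlety worth flagging is that the hypothesis $Q(x)>0$ on the support of $P$ is essential: if $Q$ vanishes on a set of positive $P$-measure, then $P/Q$ is undefined there and the reweighted integral misses part of the mass of $f\cdot P$, so the identity would fail. Since the lemma explicitly requires $Q(X)>0$ everywhere, this technicality does not arise, and the result applies directly to the replay-buffer reweighting of Section~\ref{sec:rb}, where the coefficient $c_{e_t}^i$ plays the role of $P(x)/Q(x)$ with $P$ taken to be the current joint policy $\pi_\theta^{-i}$ and $Q$ the historical policy $\pi_{\hat\theta}^{-i}$ under which the sample was actually generated.
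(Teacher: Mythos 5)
Your argument is correct and matches the paper's own verification, which is exactly the same multiply-and-divide-by-$Q(x)$ change of measure inside the integral $\int_x P(x)f(x)\,dx = \int_x Q(x)\frac{P(x)}{Q(x)}f(x)\,dx$; the paper simply leaves the Monte Carlo step implicit in the $\approx$ symbol, whereas you spell out the law-of-large-numbers justification. Your remark about $Q$ needing to be positive on the support of $P$ is a sensible extra observation but not something the paper addresses.
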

	
	Lemma \ref{lem:is} can be simple verified by
	\begin{equation}
	\begin{aligned}
	&\mathop{\mathbb{E}}\limits_{P(X)}\left[f(X)\right]
	=\int_xP(x)f(x)dx\\
	=&\int_xQ(x)\frac{P(x)}{Q(x)}f(x)dx
	=\mathop{\mathbb{E}}\limits_{Q(X)}\left[f(X)\frac{P(X)}{Q(X)}\right].
	\end{aligned}
	\end{equation}
	Lemma \ref{lem:is} shows that by assigning reweighing coefficients $c(x)$ to each sample $x$, the expectation over a distribution $P(X)$ can be estimated by samples taken from another distribution $Q(X)$. Thus the expectation term over $a^{-i}\sim\pi_{\theta}^{-i}$ in \eqref{equ:gL} can be estimated by samples taken from old policies $a^{-i}\sim\pi_{\hat{\theta}}^{-i}$, by assigning coefficient $\frac{\pi_{\theta}^{-i}(s_t,a_t^{-i})}{\pi_{\hat{\theta}}^{-i}(s_t,a_t^{-i})}$ to each experience $e_t$, where $\pi_{\hat{\theta}}^{-i}(s_t,a_t^{-i})$ is the probability density of the policy when generating $e_t$. Besides, the coefficient $\pi_{\hat{\theta}}^{-i}(s_t,a_t^{-i})$ can be estimated by $e^{c_{e_t}^i}$, where $c_{e_t}^i$ is defined in Section \ref{sec:rb}.
	
	Therefore, sample a batch $\mathcal{B}^i$ from replay buffer $D^i$ and the gradient of $L^i$ with respect to $\phi^i$ can be estimated by
	\begin{equation}
	\label{equ:g1}
	\begin{aligned}
	&\nabla_{\phi^i}L^i(\phi^i)\\
	\approx&\frac{1}{|\mathcal{B}^i|}\sum\limits_{e_t\in\mathcal{B}^i}e^{c_{e_t}^i}\left(Q_{\pi_\theta}^{\phi^i}(s_t,a_t^i)-\hat{q}^i(s_t,a_t)\right)\nabla_{\phi^i}Q_{\pi_\theta}^{\phi^i}(s_t,a_t^i).
	\end{aligned}
	\end{equation}
	where
	\begin{equation}
	\label{equ:g2}
	\hat{q}^i(s_t,a_t)=R(s_t,a_t)+\gamma Q_{\pi_\theta}^{\hat{\phi}^i}(s_{t+1},\tilde{a}^i)
	\end{equation}
	is an estimator of $q^i(s_t,a_t)$ by taking a sample $\tilde{a}^i$ freshly from current local policy $\pi^i_\theta$.
	
	In each step of updating, agent $i$ locally generates a batch $\mathcal{B}^i$ and estimates the gradient of $L^i$ with respect to parameters of its critic $\phi^i$ with \eqref{equ:g1} and \eqref{equ:g2}. Then, update $\phi^i$ with one step of stochastic gradient descent.
	
	We have introduced the policy evaluation algorithm when using NNs as function approximation, and in the next section, the policy improvement algorithm will be proposed.
	
	\subsection{Policy Improvement}
	In Section \ref{sec:spi}, we proposed the policy improvement algorithm for discrete state-action-space MDP and tabular setting. Here it will be extended to fit for NNs-approximation scenarios.
	
	The policies should optimize to maximize the state-value function under each state $s$, $V^{\pi_\theta}(s)$, as in \eqref{equ:pi1} and \eqref{equ:pi2}. Thus here we set the objective function as
	\begin{equation}
	J(\theta)=\mathop{\mathbb{E}}\limits_{s\sim D^i}V^{\pi_{\theta}}(s),
	\end{equation}
	where we use a batch of state sampled from the replay buffer $D^i$ to reduce variance when estimating gradient of $J$ with respect to $\theta$.
	
	In order to execute the policy improvement process in a distributed manner, each agent calculates $J$ with its local action-value function, thus according to \eqref{equ:vp}, objective function of agent $i$ is
	\begin{equation}
	\label{equ:ji}
	J^i(\theta^i)=\mathop{\mathbb{E}}\limits_{\substack{s\sim D^i\\\tilde{a}^i\sim{\pi_\theta^i}}}Q_{\pi_\theta}^{\phi^i}(s,\tilde{a}^i),
	\end{equation}
	where $\theta^i$ is the parameter of $\pi^i_\theta$.
	
	Then agent $i$ calculate the gradient of $J^i$ with respect to $\theta^i$ and update its local policy with one step of gradient descent. We utilize the reparameterization trick in SAC \cite{haarnoja2018soft} to calculate the gradient. Specifically, we use squashed Gaussian policy
	\begin{equation}
	a^i=f_{\theta^i}(\xi,s)=\tanh\left(\mu_\theta^i(s)+\sigma_\theta^i(s)\odot\xi\right),
	\end{equation}
	where both $\mu_\theta^i$ and $\sigma_\theta^i$ are parts of $\pi_{\theta^i}$ and $\xi\in\mathbb{R}^{|\mathcal{A}^i|}$ is a random vector follows $\xi\sim\mathcal{N}(0,I)$. $\mu_\theta^i(s)\in\mathbb{R}^{|\mathcal{A}^i|}$ and $\sigma_\theta^i(s)\in\mathbb{R}^{|\mathcal{A}^i|}$ map the standard Gaussian distribution to $\mathcal{N}\left(\mu_\theta^i(s),diag\left(\sigma_\theta^i(s)\right)\right)$, which $a^i$ is sampled from.
	
	By the reparameterization trick, \eqref{equ:ji} can be rewritten as
	\begin{equation}
	j^i(\theta^i)=\mathop{\mathbb{E}}\limits_{\substack{s\sim D^i\\ \xi\sim \mathcal{N}}}Q_{\pi_\theta}^{\phi^i}(s,f_{\theta^i}(\xi,s)).
	\end{equation}
	Then we have
	\begin{equation}
	\label{equ:nj}
	\nabla_{\theta^i}J(\theta^i)=\mathop{\mathbb{E}}\limits_{\substack{s\sim D^i\\ \xi\sim \mathcal{N}}}\nabla_{\tilde{a}^i}Q_{\pi_\theta}^{\phi^i}(s,\tilde{a}^i)\nabla_{\theta^i}f_{\theta^i}(\xi,s),
	\end{equation}
	where $\tilde{a}^i=f_{\theta^i}(\xi,s)$. Finally, since \eqref{equ:nj} contains expectation terms over sophisticated distributions, we use samples taken correspondingly to estimate it
	\begin{equation}
	\label{equ:ga}
	\nabla_{\theta^i}J(\theta^i)\approx\frac{1}{|\mathcal{B}^i|}\sum\limits_{e_t\in \mathcal{B}^i}\nabla_{\tilde{a}^i}Q_{\pi_\theta}^{\phi^i}(s_t,\tilde{a}^i)\nabla_{\theta^i}f_{\theta^i}(\xi,s_t),
	\end{equation}
	where $\mathcal{B}^i$ is randomly sampled from $D^i$.
	Finally, run one step of stochastic gradient descent to update the local policy parameters $\theta^i$.
	
	\begin{algorithm}
		\caption{MARL with continuous state-action-space MDP}
		\begin{algorithmic}
			\label{alg:csMDP}
			\REQUIRE initial policy parameters $\theta^i$ and critic parameters $\phi^i$, $1\leq i\leq N$; initial state $s_0$; iteration counter $t\gets 0$; step size of critics $\alpha$ and step size of actor $\beta$; step size of target networks $\epsilon$.
			\REPEAT
			\STATE Each agent samples action $a_t^i\sim \pi^i_\theta(s_t,\cdot)$ to generate $a_t$.
			\STATE Execute $a_t$ and observe $s_{t+1}$, $r_t$.
			\STATE Each agent stores $e_t=(s_t,a_t^i,r_t,s_{t+1})$ and the probability $\pi_\theta^i(s_t,a_t^i)$ to $D^i$. Initialize $\beta_{e_t}^i=0$ and $x_{e_t}^i=0$.
			\STATE Each agent samples $\mathcal{B}^i$ from $D^i$ randomly.
			\STATE \textbf{Policy evaluation step}. Each agent updates $Q_{\pi_\theta}^{\phi^i}$ with one step of gradient descent by following \eqref{equ:g1} and \eqref{equ:g2}.
			\STATE \textbf{Policy improvement step}. Each agent updates policy $\pi_\theta^i$ with one step of gradient ascent by following \eqref{equ:ga}
			\STATE \textbf{Consensus step}. Each agent $i$ updates $\beta_{e_t}^i$ and $x_{e_t}^i$ for all $e_t\in D^i$ locally by following \eqref{equ:bu1} and \eqref{equ:bu2}, and then exchange information with neighbors by following \eqref{equ:bu3}.
			\STATE Each agent update its target network $\hat{\phi}^i\gets(1-\epsilon)\hat{\phi}^i+\epsilon \phi^i$.
			\STATE Set iteration counter $t\gets t+1$.
			\UNTIL{Convergence}
		\end{algorithmic}
	\end{algorithm}
	
	The pseudo-code for dealing with continuous state-action-space MDP is shown in Algorithm \ref{alg:csMDP}. The policy evaluation step and policy improvement step are alternatively executed until convergence.
	
	In this section, we proposed reinforcement learning algorithms for discrete and continuous state-action-space multi-agent MDP. Convergence analysis is given under tabular setting. In the next section, we will design some experiments to show their effectiveness.
	
	\section{Simulation Results}
	\label{sec:sr}
	In this section, we design experiments to show the effectiveness of the proposed algorithms.
	
	Firstly, we apply Algorithm \ref{alg:fsMDP} to randomly generated discrete state-action-space MDP, whose state space $\mathcal{S}=\left\{1, 2, \dots, M\right\}$, where $M = 100$ in our experiments. The state transition matrix $P$ is generated by random numbers $P(s,a,\tilde{s})\sim U(0, 1)$ and then normalized. The reward function $R(s,a)$ is sampled from normal distribution $N(0, 1)$. Besides, in our experiments, each agent has 3 actions, i.e., $|\mathcal{A}^1|=|\mathcal{A}^2|=\dots=|\mathcal{A}^N|=3$. Finally, we compare Algorithm \ref{alg:fsMDP} with a centralized actor-critic algorithm and Q-learning algorithm under $N=4$ and $N=5$ agents scenarios respectively, where communication network $\mathcal{G}$ is a ring. The centralized algorithm (termed as centralized AC) is a special case of Algorithm \ref{alg:fsMDP} with only 1 global agent, whose action is the combination of all the agents' actions. The Q-learning algorithm also runs by contaminating all the agents as a global agent.
	
	We found that the proposed distributed reinforcement learning algorithm, Algorithm \ref{alg:fsMDP}, converges faster compared with centralized AC and Q-learning. Algorithm \ref{alg:fsMDP} is more suitable for larger state-action space. In these two scenarios, the size of the state-action space is $100\times3^4=8,100$ and $100\times3^5=24,300$, respectively. In the scenario with smaller state-action space (Fig. \ref{fig:fs_10_3_4}), Algorithm \ref{alg:fsMDP} converges faster but falls into the local minimum, thus the final return (''return'' is defined as the accumulated reward in the whole episode) is worse than that of the centralized AC. However, it also achieves better performance than Q-learning. In the scenario with larger state-action space (Fig. \ref{fig:fs_10_3_5}), Algorithm \ref{alg:fsMDP} achieves the best performance from the aspects of both convergence rate and final return. Because the searching space is decomposed and each agent only focuses on its own part, the convergence rate is improved.
	
	\begin{figure*}
		\centering
		\subfigure[$N=4$]{
			\includegraphics[scale=0.35]{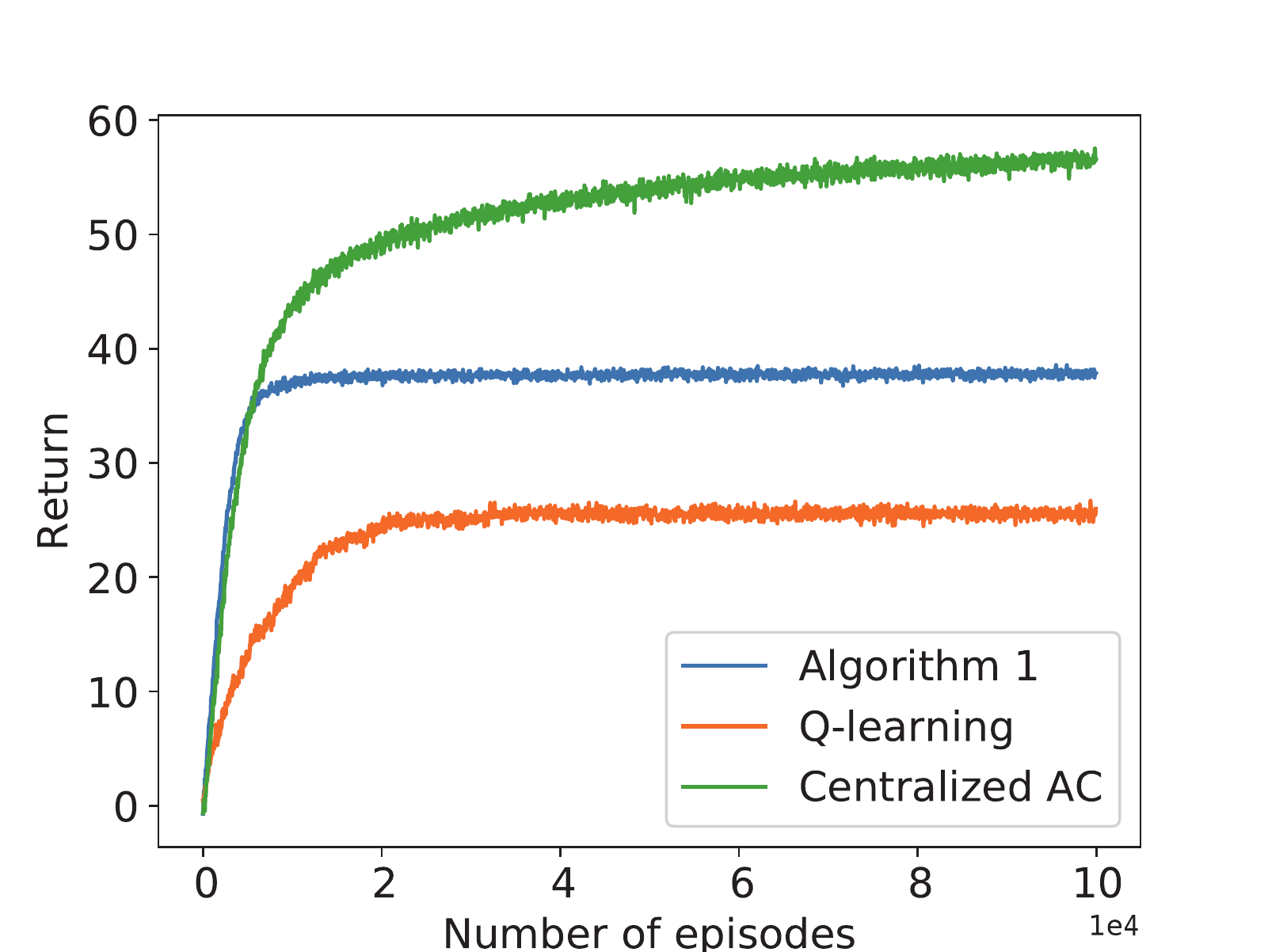}\label{fig:fs_10_3_4}}
		\hspace{1pt}
		\subfigure[$N=5$]{
			\includegraphics[scale=0.35]{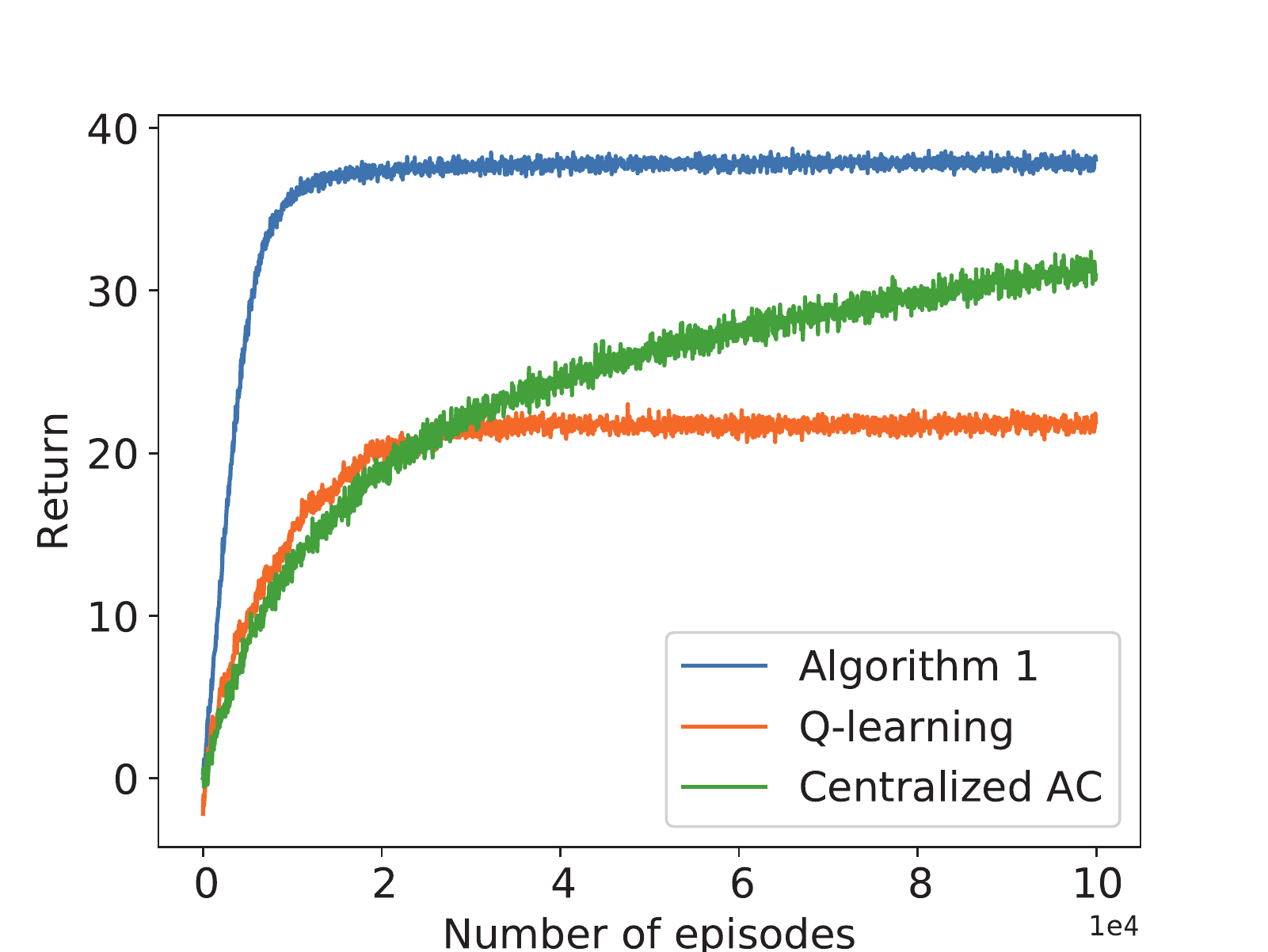}\label{fig:fs_10_3_5}}
		\hspace{1pt}
		\subfigure[simple spread]{
			\includegraphics[scale=0.35]{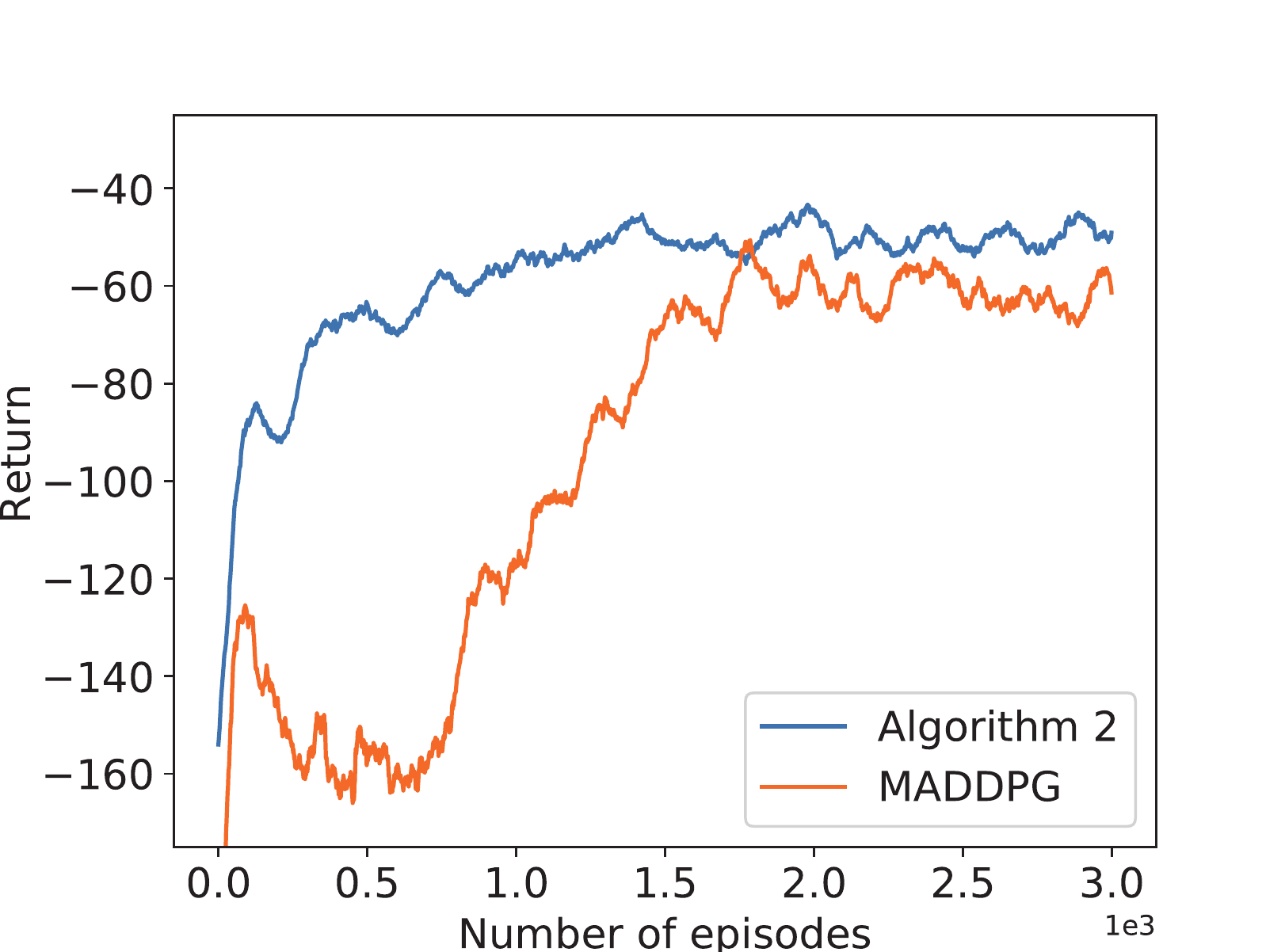}\label{fig:cs_MDP}}
		\caption{(a) and (b) are learning curves for $N=4$ and $N=5$ environments respectively, which are generated by averaging 5 times of independent random experiments. (c) is the learning curves for simple spread environment.}
		\label{fig:fs}
	\end{figure*}
	
	Above we show the results from discrete state-action-space MDP. A more challenging case is MDP with continuous state-action space since the searching space is continuous and has infinite state-action pairs. We further design experiments to Algorithm \ref{alg:csMDP}. Then ''simple spread'' environment is modified from \cite{lowe2017multi}. As illustrated in Fig. \ref{fig:cs_MDP}, there are $N$ agents aiming to reach their targets without collision. The communication network is a ring. The instantaneous reward of each agent is composed of 2 parts, the negative value of the distance to its target and -1 if it collides with other agents. The global reward is calculated by averaging the rewards of all the agents. The global state contains the location of all the agents and targets. The action of each agent is a five-dimensional vector with continuous value range. We set $N=3$ in our experiment.
	
	\begin{figure}
		\centering
		\subfigure[]{
			\includegraphics[scale=0.35]{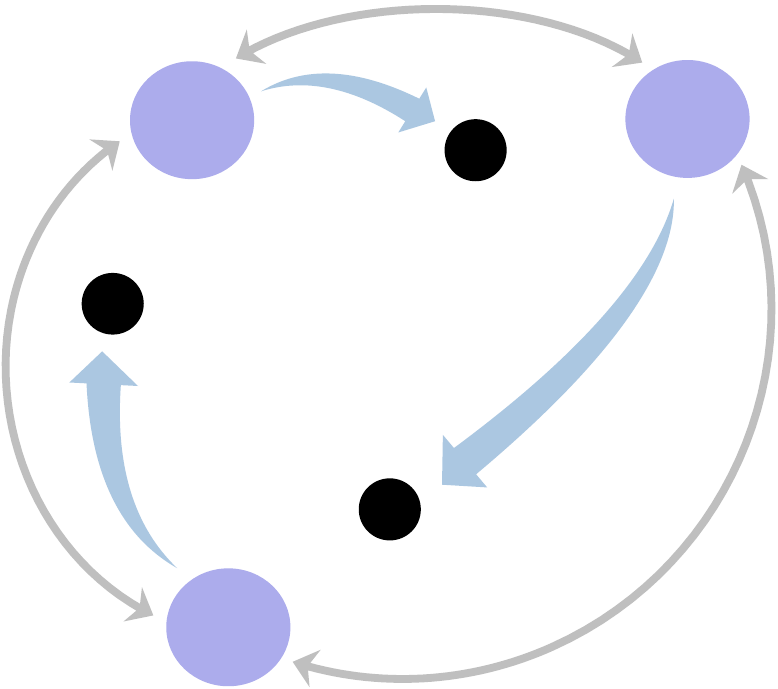}\label{fig:simple_spread}}
		\hspace{20pt}
		\subfigure[]{
			\includegraphics[scale=0.12]{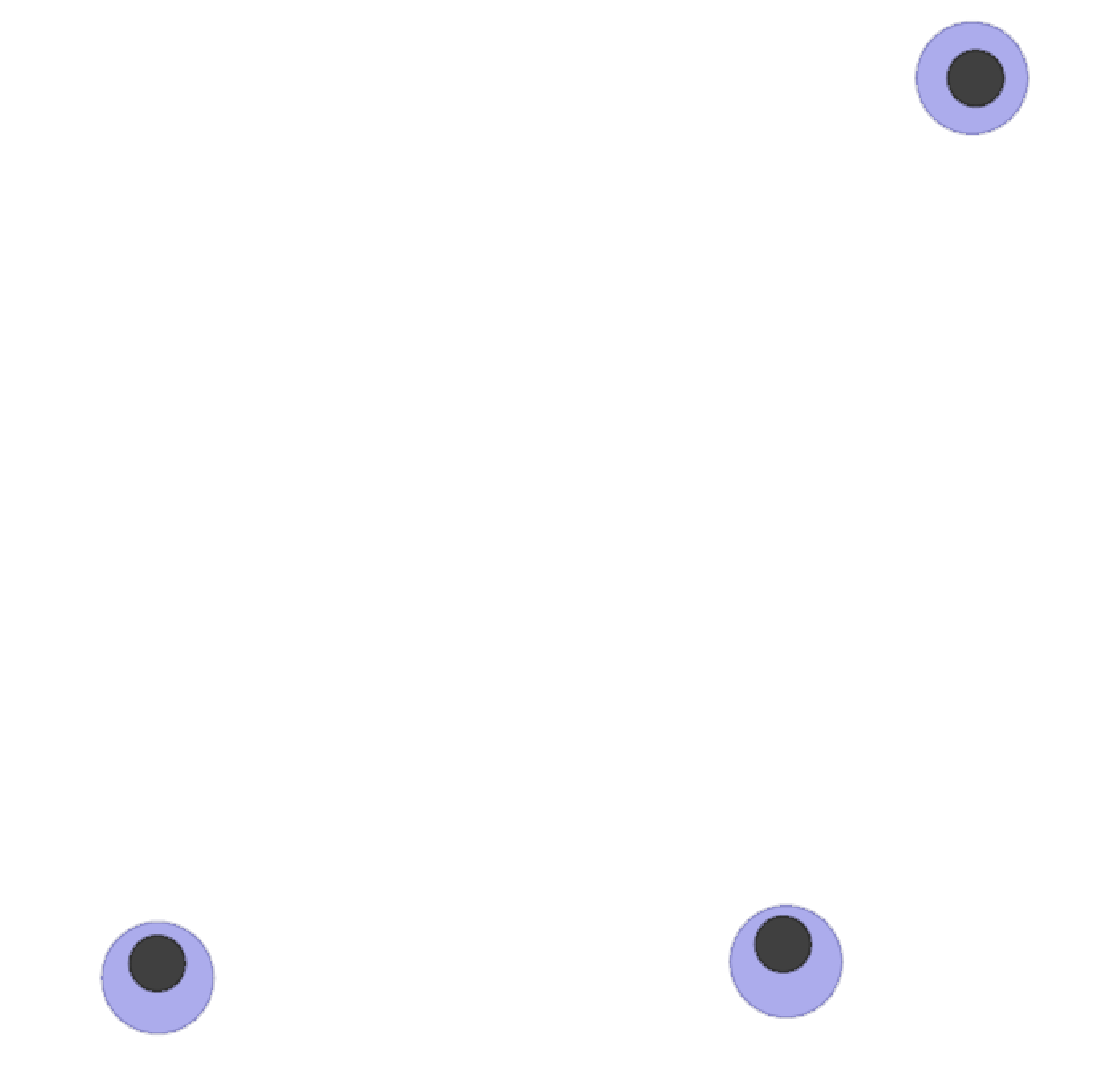}\label{fig:snapshot}}
		\caption{(a) Snapshot of the simple spread environment. The purple balls are the agents and the black dots are their targets. The blue arrows indicate their correspondence. The gray bi-directional arrows are the edges of the communication network $\mathcal{G}$. (b) Snapshot of the final policy.}
	\end{figure}
	
	Algorithm \ref{alg:csMDP} is compared with MADDPG, a centralized MARL algorithm. The learning curves are shown in Fig. \ref{fig:cs_MDP}. It can be found that the proposed Algorithm \ref{alg:csMDP} achieves a higher final return, and the convergence rate also outperforms MADDPG. Besides, MADDPG is an algorithm of centralized training and decentralized execution, which means that it requires the policy of other agents in the training process. However, others' policies may violate privacy in our setting. Algorithm \ref{alg:csMDP} is fully decentralized, which does not require the policy of others during either the learning or the execution process.
	
	Here a snapshot of the final policy is given in Fig. \ref{fig:snapshot}. After training, each agent can find its target precisely without collision.
	
	In this section, simulation results are shown. Besides, we compare them with relative algorithms, e.g., Centralize AC, Q-learning, and MADDPG. The results show that the proposed algorithms perform well from the aspects of both convergence rate and final return, especially under the limitation of privacy.
	
	\section{Conclusion}
	\label{sec:con}
	In this paper, we address the problem of Multi-Agent Reinforcement Learning (MARL). Especially, we focus on a fully decentralized setting, under which each agent has no permission to use others' policies for the purpose of protecting privacy. Each agent makes individual decisions locally, but in the training process, they can exchange information with their neighbors among a communication network. We firstly design the algorithm for discrete state-action-space MDP and give the convergence analysis. Then, we extend it to continuous state-action-space scenarios. Finally, we design experiments to show the effectiveness of the proposed algorithms. By comparing with relative algorithms, all the proposed algorithms are shown to work well from the aspects of both convergence rate and final return. So far, our algorithms mainly work under cooperative MARL problems, and it is an interesting direction in future work to extend them to competitive MARL problems. Besides, how to reach consensus with less information exchanging will also be studied in future work. Since in real scenarios, the capacity of the communication channel is usually limited.
	
	\bibliographystyle{IEEEtran}
	\bibliography{references}{}
	
\end{document}